\newtheorem{theorem}{Theorem}[section]
\newtheorem{lemma}{Lemma}[section]
\begin{document}
\title{Hohenberg-Kohn Theorem for Coulomb Type Systems  \thanks{{This work was partially
supported by the National Science Foundation of China under grants
10871198 and 10971059, the Funds for Creative Research Groups of
China under grant 11021101, and the National Basic Research Program
of China under grant 2011CB309703.}}}

\author{ Aihui
Zhou\thanks{LSEC, Institute of Computational Mathematics and
Scientific/Engineering Computing, Academy of Mathematics and Systems
Science, Chinese Academy of Sciences, Beijing 100190, China
({azhou@lsec.cc.ac.cn}).}}

\date{}
\maketitle

\begin{abstract}
Density functional theory (DFT) has become a basic tool for the
study of electronic structure of matter, in which the Hohenberg-Kohn
theorem plays a fundamental role in the development of DFT.
Unfortunately,  the existing proofs are incomplete even
incorrect; besides, the statement of the Hohenberg-Kohn theorem for
many-electron Coulomb systems is  not perfect. In this paper, we
shall restate the Hohenberg-Kohn theorem for Coulomb type systems
and present a rigorous proof by using  the Fundamental Theorem of
Algebra.
\end{abstract}

{\bf Keywords:}\quad  Coulomb system, density functional theory,
electronic structure, Fundamental Theorem of Algebra, Hohenberg-Kohn
theorem.

\section{Introduction}\setcounter{equation}{0}
 The modern formulation of density functional theory (DFT)
originated in the work  of Hohenberg and Kohn
\cite{hohenberg-kohn64}, on which based the other classic work in
this field by Kohn and Sham \cite{kohn-sham65}, the Kohn-Sham
equation, has become a basic mathematical model of much of
present-day methods for treating electrons in atoms, molecules,
condensed matter, and man-made structures
\cite{eschrig96,gross-dreizler95,jensen99,martin04,parr-yang89}.
Although it is quite profound, DFT is  not entirely elaborated yet
(c.f., e.g.,
\cite{lieb83,pino-bokanowski-etal07,szczepanik-dulak-wesolowski07}
and references cited therein).
 Since the relevant assumptions are incompatible  with the
Kato cusp condition, Kryachko has pointed out that the usual {\it
reductio ad absurdum} proof of  the original Hohenberg-Kohn theorem
is unsatisfactory \cite{kryachko05,kryachko06}. Note that Kato
theorem \cite{kato57} tells the electron-nucleus cusp conditions at
nucleus positions only,  we are not able to uniquely determine the
electron density from the cusp conditions, though the electron density uniquely determines
the external Coulombic potential. Consequently, there is a
gap in the proof of the theorem by using the Kato theorem in
\cite{kryachko06}.
We note that Lieb has tried to examine the theorem rigorously
\cite{lieb83}. But Lieb's proof required that the $N$-particle
wavefunction does not vanish in a set of positive measure that is
unclear in a real system (c.f. \cite{pino-bokanowski-etal07}). We
refer to \cite{hadjisavvas-theophilou84,kryachko05,levy79,lieb83,
pino-bokanowski-etal07,szczepanik-dulak-wesolowski07} and references
cited therein for discussions on the Hohenberg-Kohn theorem. Indeed,
the statement of the original Hohenberg-Kohn theorem for
many-electron Coulomb systems is not  perfect (see Section
\ref{h-k-theorem}).

In this paper, we shall state the Hohenberg-Kohn theorem for Coulomb
type systems precisely (see Theorem \ref{hohenberg-kohn}) and
present a rigorous proof by using the Fundamental Theorem of Algebra
(see Section \ref{proof}).

\section{Hohenberg-Kohn theorem}\label{h-k-theorem}\setcounter{equation}{0}
We see that the approach of Hohenberg and Kohn is to formulate DFT
as an exact theory of many-body systems. The formulation applies to
any system of interacting particles in an external potential $v$,
including any problem of electrons and fixed nuclei, where the
Hamiltonian can be written as
\begin{eqnarray}\label{hamit-ext}
{\cal H}&=&-\sum_{i=1}^N\frac{\hbar^2}{2m_e}\nabla^2_{x_i}
+\sum^N_{i=1}v(x_i)+\frac{1}{2}\sum_{i,j=1, i\ne
j}^N\frac{e^2}{|x_i-x_j|},
\end{eqnarray}
where $\hbar$ is Planck's constant divided by $2\pi, m_e$ is the
mass of the electron, $\{x_i:i=1,\cdots,N\}$ are the variables that
describe the electron positions, and $e$ is the electronic charge.
For an electronic Coulomb system,
\begin{eqnarray}\label{coulomb-v}
v(x)\equiv
v_{\{_{Z_j}\},\{r_j\}}(x)=-\sum_{j=1}^M\frac{Z_je^2}{|x-r_j|}
\end{eqnarray} is determined by $\{Z_j:j=1,2,\cdots,M\}$, which are the valence
charges of the nuclei, and $\{r_j: j=1,2,\cdots,M\}$, which are the
positions of the nuclei. The energy of the system can be expressed
by
\begin{eqnarray}\label{energy}
E=(\Psi, {\cal H}\Psi)=(\Psi, (T+V_{ee})\Psi)+\int_{\mathbb{R}^3}
v(x)\rho(x),\end{eqnarray}
where$$T=-\sum_{i=1}^N\frac{\hbar^2}{2m_e}\nabla^2_{x_i}$$ is the
kinetic energy operator,$$ V_{ee}=\frac{1}{2}\sum_{i,j=1, i\ne
j}^N\frac{e^2}{|x_i-x_j|}$$ is the electron-electron repulsion
energy operator, and
\begin{eqnarray}\label{density-def}
\rho(x)\equiv
\rho^\Psi(x)=N\sum_{\sigma_1,\sigma_2,\cdots,\sigma_N}\int_{\mathbb{R}^{3(N-1)}}
|\Psi((x,\sigma_1),(x_2,\sigma_2),\cdots,(x_N,\sigma_N))|^2dx_2\cdots
dx_N \end{eqnarray} is the single-particle density.

Let ${\cal H}_0=T+V_{ee}$ and $v$ be a single-particle potential in
function space $$\mathbb{V}\equiv
L^{3/2}(\mathbb{R}^{3})+L^{\infty}(\mathbb{R}^{3}).$$ The total
Hamiltonian is ${\cal H}_v={\cal H}_0+{\cal V}$, where
$${\cal V}=\sum_{i=1}^Nv(x_i).$$
The associated ground state energy $E(v)$ is defined to be
\begin{eqnarray}\label{ground-state}
E(v)\equiv E(v,N)=\inf\{(\Psi,{\cal H}_v\Psi): \Psi\in
\mathscr{W}_N\},
\end{eqnarray} where
$$\mathscr{W}_N=\{\Psi\in H^1(\mathbb{R}^{3N}):
\sum_{\sigma_1,\sigma_2,\cdots,\sigma_N}\int_{\mathbb{R}^{3N}}|\Psi|^2=1\}.$$

Note that there may or may not be a minimizer $\psi$ in
$\mathscr{W}_N$, and if there is one it may not be unique
\cite{lieb83}. Thus, we should introduce a set  of
minimizers\begin{eqnarray*} \mathscr {G}_{v}\equiv \mathscr
{G}_{v,N}=\arg\inf\{(\Psi,{\cal H}_v\Psi): \Psi\in \mathscr{W}_N\}.
\end{eqnarray*}
Any $\Psi$ in $\mathscr {G}_{v}$ is called a ground state of
(\ref{ground-state}). Define\begin{eqnarray*}
 \mathscr {V}_N=\{v\in \mathbb{V}:  \mathscr {G}_{v,N}\not=\emptyset\}
\end{eqnarray*}
and \begin{eqnarray*}
 \mathscr {D}_N=\{\rho^\Psi:
 \Psi\in \mathscr {G}_{v} ~\mbox{for some}~ v\in \mathscr {V}_N\}.
\end{eqnarray*}
 If $\Psi\in \mathscr {G}_{v}$, then
\begin{eqnarray}\label{ground-state-weak}
\mathcal {H}_v\Psi=E(v)\Psi
\end{eqnarray}
in the distributional sense.

The original Hohnberg-Kohn theorem (see page B865 of
\cite{hohenberg-kohn64}) states that  the external potential $v$
``is a unique functional of" the electronic density in the ground
state,``apart from a trivial additive constant." In our notation,
Lieb's statement of this theorem may be written as the following
(see Theorem 3.2 of \cite{lieb83}):
\begin{theorem}\label{hohenberg-kohn0}
Suppose $\Psi_v\in \mathscr {G}_{v}$ and $\Psi_{v'}\in \mathscr
{G}_{v'}$. If $v\not=v'+constant$, then
$\rho^{\Psi_v}\not=\rho^{\Psi_{v'}}$.
\end{theorem}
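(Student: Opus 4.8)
The plan is a \emph{reductio ad absurdum} that follows the classical route but replaces its weak step with an argument exploiting the rigidity of Coulomb potentials. Suppose, contrary to the assertion, that $\rho:=\rho^{\Psi_v}=\rho^{\Psi_{v'}}$, and set $E=E(v)$, $E'=E(v')$, $W=v-v'\in\mathbb{V}$. I would first record the two variational estimates: since $\Psi_{v'}\in\mathscr{W}_N$ is a trial state for $\mathcal{H}_v=\mathcal{H}_{v'}+\sum_{i=1}^N W(x_i)$,
\[
E\le(\Psi_{v'},\mathcal{H}_v\Psi_{v'})=E'+\int_{\mathbb{R}^3}W(x)\rho^{\Psi_{v'}}(x)\,dx=E'+\int_{\mathbb{R}^3}W\rho,
\]
and, symmetrically, $E'\le E-\int_{\mathbb{R}^3}W\rho$. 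Adding these forces both to be equalities; in particular $(\Psi_{v'},\mathcal{H}_v\Psi_{v'})=E$, so $\Psi_{v'}$ is itself a ground state of $\mathcal{H}_v$, i.e. $\Psi_{v'}\in\mathscr{G}_v$. Then (\ref{ground-state-weak}), used once with the potential $v$ and once with $v'$, shows that the single function $\Psi_{v'}$ satisfies both $\mathcal{H}_v\Psi_{v'}=E\Psi_{v'}$ and $\mathcal{H}_{v'}\Psi_{v'}=E'\Psi_{v'}$ in the distributional sense; subtracting gives
\[
\Big(\sum_{i=1}^N W(x_i)-(E-E')\Big)\Psi_{v'}=0\qquad\text{a.e. on }\mathbb{R}^{3N}.
\]
Hence $\Psi_{v'}$ vanishes almost everywhere off the set $S=\{(x_1,\dots,x_N):\sum_{i=1}^N W(x_i)=E-E'\}$, and since $\|\Psi_{v'}\|=1$ the set $S$ must have positive Lebesgue measure.

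The heart of the proof is that this is impossible unless $W$ is a.e. constant, and here the Coulomb structure and the Fundamental Theorem of Algebra enter. Since each $|x-r_j|^{-1}$ is real-analytic away from $r_j$, the function $W$ is real-analytic on the connected, open, full-measure set $U=\mathbb{R}^3\setminus(\{r_j\}\cup\{r'_k\})$. I would show that if $W$ is not a.e. constant, then every level set $\{x\in U:W(x)=\lambda\}$ has measure zero. Indeed, were such a level set of positive measure, $W-\lambda$ would vanish on a segment of some line avoiding all the $r_j$ and $r'_k$; along such a line each $|x-r|^2$ is a strictly positive quadratic, so after multiplying through by the product of these quadratics and eliminating the finitely many radicals one is left with a polynomial that would vanish on a whole segment, hence, by the Fundamental Theorem of Algebra, identically --- and unwinding that identity forces $W\equiv\lambda$ on every such line, i.e. $W$ a.e. constant, a contradiction. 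With the level sets of $W$ of measure zero, Fubini's theorem gives $|S|=0$, contradicting the previous paragraph. Therefore $W=v-v'$ is a.e. constant, i.e. $v=v'+\text{constant}$ in $\mathbb{V}$, which contradicts the hypothesis; this establishes $\rho^{\Psi_v}\ne\rho^{\Psi_{v'}}$. (The same analysis in fact forces the additive constant to be $0$ and the two families of nuclear charges and positions to agree, since $-\Delta W$ is, as a distribution, a finite sum of point masses at the $r_j$ and $r'_k$.)

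I expect essentially all of the difficulty to sit in that middle step --- the claim that a \emph{non-constant} difference of Coulomb potentials has every level set of measure zero. This is exactly where the textbook \emph{reductio} argument is incomplete (it tacitly assumes a nondegenerate ground state, so that the variational inequalities become strict and $\Psi_{v'}$ can never be a ground state of $\mathcal{H}_v$) and where Lieb's proof inserts the unverified hypothesis that $\Psi_{v'}$ does not vanish on a set of positive measure; the Fundamental-Theorem-of-Algebra argument is meant precisely to supply it. Two routine matters also need care: that the two eigenvalue equations for $\Psi_{v'}$ hold in a distributional sense robust enough to be subtracted, namely (\ref{ground-state-weak}), and that $\sum_i W(x_i)$ defines a multiplication operator acting on $\Psi_{v'}\in H^1(\mathbb{R}^{3N})$ so that the displayed identity is a genuine almost-everywhere statement --- both follow from $W\in\mathbb{V}$ and the Sobolev embedding.
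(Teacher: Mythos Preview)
First, note that the paper does not itself prove Theorem~\ref{hohenberg-kohn0}: it is quoted as Lieb's formulation from \cite{lieb83}, and the paper's point is precisely that the existing proofs are incomplete. What the paper proves rigorously is Theorem~\ref{hohenberg-kohn}, restricted to $v,v'\in\mathbb{V}_C$. Your argument likewise invokes the Coulomb structure of $W$, so you are in effect proving the Coulomb case; Theorem~\ref{hohenberg-kohn0} for general $v\in\mathbb{V}$ remains open as far as this paper is concerned.

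With that caveat, your first paragraph --- variational saturation and subtraction of the two Schr\"odinger equations --- is precisely the paper's Lemma~\ref{pino-etal-lemma}. The divergence is in how you and the paper show that $S=\{\sum_i W(x_i)=E-E'\}$ is null when $W\ne 0$. The paper works directly in $\mathbb{R}^{3N}$: Lemma~\ref{quadratic-lemma} is an explicit inductive squaring identity which, applied with the $mN$ terms $\alpha_j/|x_i-r_j|$ together with the constant $E-E'$, produces a \emph{nonzero} polynomial in the variables $|x_i-r_j|^2$ (hence in $x_1,\dots,x_N$) whose zero set contains $S$; Lemma~\ref{multi-FTA} then gives $|S|=0$. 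You instead reduce by Fubini to single-particle level sets of $W$ and then try to work along a line.

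The Fubini reduction is legitimate and arguably a simplification, but the line argument has two soft spots. First, ``were such a level set of positive measure, $W-\lambda$ would vanish on a segment of some line'' is not true as stated: positive-measure sets need not contain segments. What Fubini actually gives is a line meeting the level set in a set of positive one-dimensional measure; since $W-\lambda$ is real-analytic along that line, one then gets $W\equiv\lambda$ on the whole line --- an identity-theorem step, not the Fundamental Theorem of Algebra. Second, ``eliminating the finitely many radicals'' is exactly the nontrivial content of Lemma~\ref{quadratic-lemma}, which you would have to supply; and even granting it, the passage from a one-line identity to ``$W\equiv\lambda$ on every such line'' is not explained. The clean fix is either to invoke directly that a nonzero real-analytic function on a connected open set has a null zero set, or to carry out the radical elimination in several variables as in Lemma~\ref{quadratic-lemma} and then apply Lemma~\ref{multi-FTA}, which is what the paper does. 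Your closing remark that $-\Delta W$ is a finite combination of point masses is a pleasant alternative way to see that two distinct Coulomb potentials cannot differ by a nonzero constant; the paper deduces this instead from its algebraic machinery.
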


Consider Coulomb potential set
\begin{eqnarray*}
\mathbb{V}_C=\left\{-\sum_{j=1}^M\frac{Z_je^2}{|x-r_j|}: Z_j\in
\mathbb{R}, r_j\in \mathbb{R}^3 (j=1,2,\cdots, M);
M=1,2,\cdots\right\}.\end{eqnarray*} Obviously,
$\mathbb{V}_C\subsetneq \mathbb{V}$.

It will be shown by the Fundamental Theorem of Algebra  that for any
$v,v'\in \mathbb{V}_C$, $v\not=v'+constant$ for any constant if
$v\not=v'$ (see Section \ref{proof} for details), which means that
the requirement $v\not=v'+constant$ in the Hohnberg-Kohn theorem or
Theorem \ref{hohenberg-kohn0} is superfluous for  electronic Coulomb
system. More precisely, $v\not=v'+constant$ in the Hohnberg-Kohn
theorem should be replaced by $v\not=v'$. Therefore the
Hohenberg-Kohn theorem or Theorem \ref{hohenberg-kohn0} for Coulomb
type systems should be restated as follows:
\begin{theorem}\label{hohenberg-kohn}
The map: $v\in \mathbb{V}_{C}\longrightarrow \rho_v\in \mathscr
{D}_N$ is one-to-one, where $\rho_v=\rho^{\Psi}$ with $\Psi\in
\mathscr {G}_{v}$.
\end{theorem}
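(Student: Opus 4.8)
The plan is to establish the theorem in two logically separate pieces: first, that $v \mapsto \rho_v$ is well-defined on $\mathbb{V}_C$ in the sense that any two ground states of the \emph{same} $v$ share the same density (so that $\rho_v$ does not depend on the choice of $\Psi \in \mathscr{G}_v$), and second, that distinct $v, v' \in \mathbb{V}_C$ produce distinct densities. For the injectivity part I would invoke Theorem \ref{hohenberg-kohn0}: it suffices to show that for $v, v' \in \mathbb{V}_C$ with $v \neq v'$ one has $v \neq v' + \mathrm{constant}$, because then Theorem \ref{hohenberg-kohn0} immediately gives $\rho^{\Psi_v} \neq \rho^{\Psi_{v'}}$ for any choice of ground states. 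So the crux is the purely analytic claim advertised in the excerpt: \emph{two Coulomb potentials that differ by an additive constant must be identical}.

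To prove that claim I would argue as follows. Suppose $v(x) = -\sum_{j=1}^M Z_j e^2 / |x - r_j|$ and $v'(x) = -\sum_{k=1}^{M'} Z'_k e^2 / |x - r'_k|$ satisfy $v(x) = v'(x) + c$ for all $x$ outside the (finite) set of nuclear positions. First, letting $|x| \to \infty$ along any ray shows both $v$ and $v'$ tend to $0$, hence $c = 0$ and $v \equiv v'$ away from the poles. Next, after merging coincident centers one may assume the $r_j$ are distinct with nonzero net charges, and likewise for the $r'_k$; the function $w := v - v'$ is then identically zero on $\mathbb{R}^3$ minus a finite set. If some center $r_j$ were not among the $r'_k$, then multiplying $w(x)$ by $|x - r_j|$ and letting $x \to r_j$ would isolate the residue $-Z_j e^2 \neq 0$, a contradiction; so the center sets coincide, and comparing residues at each common center forces $Z_j = Z'_k$. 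This is essentially a partial-fractions / uniqueness-of-poles argument, and it is exactly where the Fundamental Theorem of Algebra enters in the authors' treatment: one clears denominators to write $w(x)\prod_j|x-r_j| \equiv 0$ as a polynomial identity in the coordinates (or in a single variable along a generic line through the configuration) whose vanishing, via the fundamental theorem of algebra, forces all coefficients — hence all the $Z_j$ — to vanish unless the nuclear data match up.

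The main obstacle I anticipate is handling the bookkeeping cleanly when the two potentials have different numbers of listed centers or repeated/near-coincident centers: one must first normalize to a canonical form (distinct centers, nonzero charges) and be careful that the ``difference is a constant'' hypothesis is only assumed on the common domain of definition $\mathbb{R}^3 \setminus \{r_j\} \cup \{r'_k\}$, not literally everywhere. A secondary technical point is making the limiting/residue extraction rigorous — i.e. justifying that $\lim_{x \to r_j} |x - r_j|\, v(x) = -Z_j e^2$ with all other terms contributing $0$ — which is elementary but should be stated. Once the ``$v \neq v' + \mathrm{constant} \iff v \neq v'$'' equivalence on $\mathbb{V}_C$ is in hand, the theorem follows by combining it with Theorem \ref{hohenberg-kohn0} and the well-definedness of $\rho_v$, and I would close by remarking that this is precisely why the ``additive constant'' caveat in the classical statement is vacuous for genuine Coulomb systems.
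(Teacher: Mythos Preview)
Your proposal has a genuine gap: it treats Theorem~\ref{hohenberg-kohn0} as an established black box and reduces the whole problem to the (easy) observation that two Coulomb potentials cannot differ by a nonzero constant. But the entire motivation of the paper is that Theorem~\ref{hohenberg-kohn0} is \emph{not} established: Lieb's proof requires knowing that the ground state $\Psi_v$ does not vanish on a set of positive measure (a unique-continuation-type statement), and the paper explicitly flags this as the unresolved issue. Invoking Theorem~\ref{hohenberg-kohn0} therefore begs the question. Your limit/residue argument that $v-v'=c$ forces $c=0$ and then $v=v'$ is perfectly correct, but it only removes the words ``apart from a constant'' from the \emph{statement}; it does nothing to repair the \emph{proof}.

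The paper's actual argument is quite different from what you sketch, and the Fundamental Theorem of Algebra is not used where you think. Starting from Lemma~\ref{pino-etal-lemma}, one has
\[
\Bigl(\sum_{i=1}^N (v'-v)(x_i)-(E(v')-E(v))\Bigr)\Psi_v=0 \quad \text{in } \mathbb{R}^{3N}.
\]
The point is not merely that the bracketed factor is not identically zero when $v\neq v'$ (which would still leave open the possibility that $\Psi_v$ vanishes on its complement), but that its zero set has Lebesgue measure zero in $\mathbb{R}^{3N}$. To see this, write $(v'-v)(x)=\sum_{j=1}^m \alpha_j/|x-r_j|$ with some $\alpha_{j_0}\neq 0$; the equation $\sum_{i,j}\alpha_j/|x_i-r_j|=\mathrm{const}$ is then an algebraic relation among the quantities $t_{ij}=1/|x_i-r_j|$. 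Lemma~\ref{quadratic-lemma} (proved by squaring inductively to eliminate the square roots) converts this into the vanishing of a genuine nonzero polynomial in the variables $|x_i-r_j|^2$, hence in the coordinates of $(x_1,\dots,x_N)$. Lemma~\ref{multi-FTA} (the multivariate consequence of the Fundamental Theorem of Algebra) then gives that this zero set has measure zero, so $\Psi_v=0$ a.e., contradicting normalization. In short, the FTA is used to control the \emph{$N$-body} level set $\{\sum_i(v'-v)(x_i)=\mathrm{const}\}\subset\mathbb{R}^{3N}$, not to compare the single-particle potentials $v$ and $v'$ on $\mathbb{R}^3$.
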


\section{Lemmas}\label{lemmass}
\setcounter{equation}{0} To prove Theorem \ref{hohenberg-kohn}, we
need some lemmas.

\begin{lemma}\label{quadratic-lemma}
Let $n\ge 2$. If $t_j\in \mathbb{R}(j=1,2,\cdots,n)$ and
$$
\delta=\sum_{j=1}^nt_j,$$ then there exist non-zero polynomials
$\{H_{n,j}(s_1,s_2,\cdots,s_n): j=0,1,2,\cdots,2^{n-1}\}$ with real
coefficients satisfying
\begin{eqnarray}\label{quadratic-identity}
\sum^{2^{n-1}}_{j=0}H_{n,j}(t_1^2,t_2^2,\cdots,t_n^2)\delta^{2(2^{n-1}-j)}=0,
\end{eqnarray}
where $H_{n,j}(s_1,s_2,\cdots,s_n)(j=1,2,\cdots,2^{n-1})$ are
homogeneous:
\begin{eqnarray*}
H_{n,j}(\lambda s_1,\lambda s_2,\cdots,\lambda
s_n)&=&\lambda^jH_{n,j}(s_1,s_2,\cdots,s_n),~ \forall \lambda\in \mathbb{R}, j=1,2,\cdots,2^{n-1},\\
H_{n,0}(s_1, s_2,\cdots, s_n)&=& 1,
\end{eqnarray*}
and $H_{n,2^{n-1}}(s_1,s_2,\cdots,s_n)$ is a monic polynomial of
degree $2^{n-1}$.
\end{lemma}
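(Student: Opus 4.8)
The plan is to prove the polynomial identity in Lemma \ref{quadratic-lemma} by induction on $n$, eliminating the $t_j$'s one at a time via repeated squaring. The base case $n=2$ is the familiar observation that from $\delta=t_1+t_2$ one gets $(\delta-t_1)^2=t_2^2$, i.e. $\delta^2-2t_1\delta+(t_1^2-t_2^2)=0$; but we want the coefficients to be polynomials in $t_1^2$ and $t_2^2$ only. So I would instead isolate and square twice: write $\delta^2+(t_1^2-t_2^2)=2t_1\delta$, square both sides to get $(\delta^2+t_1^2-t_2^2)^2=4t_1^2\delta^2$, which reads
\begin{eqnarray*}
\delta^4+2(t_1^2-t_2^2)\delta^2+(t_1^2-t_2^2)^2-4t_1^2\delta^2=0,
\end{eqnarray*}
a degree-$4=2^{2}$ monic polynomial in $\delta^2$ whose coefficients are polynomials in $t_1^2,t_2^2$; reading off $H_{2,0}=1$, $H_{2,1}(s_1,s_2)=2(s_1-s_2)-4s_1=-2s_1-2s_2$ (degree $1$, homogeneous), and $H_{2,2}(s_1,s_2)=(s_1-s_2)^2$ (monic of degree $2$) verifies the base case. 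This already exhibits the mechanism: an identity linear in the ``offending'' variable $t_n$ with the rest collected on one side, squared, produces an identity in $t_n^2$.

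For the inductive step, suppose the statement holds for $n-1$ with the reduced sum $\delta'=\sum_{j=1}^{n-1}t_j=\delta-t_n$. Apply the induction hypothesis with $\delta'$ in place of $\delta$: there are real polynomials $H_{n-1,j}$ with
\begin{eqnarray*}
\sum_{j=0}^{2^{n-2}}H_{n-1,j}(t_1^2,\dots,t_{n-1}^2)\,(\delta-t_n)^{2(2^{n-2}-j)}=0.
\end{eqnarray*}
Now expand each power $(\delta-t_n)^{2k}$ by the binomial theorem and split the whole sum according to the parity of the power of $t_n$ appearing: collect the terms with an even power of $t_n$ into $P(\delta,t_n^2;t_1^2,\dots,t_{n-1}^2)$ and the terms with an odd power of $t_n$ into $t_n\,Q(\delta,t_n^2;t_1^2,\dots,t_{n-1}^2)$, where $P$ and $Q$ are polynomials in $\delta$, in $t_n^2$, and in $t_1^2,\dots,t_{n-1}^2$. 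The identity becomes $P+t_n Q=0$, hence $P=-t_n Q$, and squaring kills the last linear occurrence of $t_n$:
\begin{eqnarray*}
P(\delta,t_n^2;t_1^2,\dots)^2-t_n^2\,Q(\delta,t_n^2;t_1^2,\dots)^2=0.
\end{eqnarray*}
This is a polynomial identity in $\delta^2$ (one checks it contains only even powers of $\delta$, because $P$ contains the even powers of $\delta$ from even $2k$-expansions paired with even $t_n$-powers and odd $\delta$-powers paired with odd $t_n$-powers, so $P$ splits as $\delta\cdot(\text{stuff})+(\text{stuff})$ and $t_nQ$ similarly — I would organize the bookkeeping so that, after squaring, every surviving monomial has $\delta$ to an even power; alternatively one may simply substitute $\delta\mapsto-\delta$, $t_j\mapsto-t_j$ and note invariance) whose coefficients are polynomials in $t_1^2,\dots,t_n^2$. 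Its $\delta$-degree is $2\deg_\delta P\le 2\cdot 2\cdot 2^{n-2}=2^{n}=2(2^{n-1})$, matching the required top index $j=2^{n-1}$.

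Finally I would track the three normalization claims through the squaring. The constant term $H_{n,0}$: the top $\delta$-power comes from squaring the top $\delta$-power of $P$, which originates from the $j=0$ term $(\delta-t_n)^{2\cdot 2^{n-2}}$ of the induction hypothesis (with $H_{n-1,0}=1$), whose leading part in $\delta$ is $\delta^{2^{n-1}}$ with coefficient $1$; squaring gives leading coefficient $1$, so $H_{n,0}=1$. The top coefficient $H_{n,2^{n-1}}$ is the $\delta$-free part of $P^2-t_n^2Q^2$, i.e. $P(0,\dots)^2-t_n^2 Q(0,\dots)^2$, a polynomial in $t_1^2,\dots,t_n^2$; by the induction hypothesis $H_{n-1,2^{n-2}}$ is monic of degree $2^{n-2}$, and I would check that the $\delta$-free part of $P\pm t_n Q$ equals $\pm(\text{the }n{-}1\text{ level top term evaluated at }\delta{=}t_n)$ up to a monic adjustment, so that after squaring $H_{n,2^{n-1}}$ is monic of degree $2^{n-1}$ — this is the bookkeeping step I expect to be the main obstacle, since ``monic'' requires pinning down a distinguished monomial and verifying no cancellation. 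Homogeneity of $H_{n,j}$ for $1\le j\le 2^{n-1}$ follows from a scaling argument: replacing $t_i\mapsto\lambda t_i$ (all $i$, including $i=n$) sends $\delta\mapsto\lambda\delta$ and $t_i^2\mapsto\lambda^2 t_i^2$; plugging into $\sum_j H_{n,j}(\lambda^2 t_1^2,\dots)\lambda^{2(2^{n-1}-j)}\delta^{2(2^{n-1}-j)}=0$ and dividing by $\lambda^{2^{n}}$ shows $H_{n,j}(\lambda^2 s_1,\dots,\lambda^2 s_n)=\lambda^{2j}H_{n,j}(s_1,\dots,s_n)$ as polynomial identities, hence each $H_{n,j}$ is homogeneous of degree $j$ in its arguments (nonzero-ness of each $H_{n,j}$ follows because the identity, viewed as a polynomial in the independent indeterminate $\delta$ after we treat $t_1,\dots,t_n$ as free, has the $H_{n,j}$ as its coefficients and cannot be identically zero in $\delta$ since its leading coefficient is $1$ — and a separate small argument shows the intermediate coefficients do not vanish identically, e.g. by exhibiting a single numeric choice of the $t_j$).
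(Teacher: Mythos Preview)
Your approach---induct on $n$, substitute $\delta'=\delta-t_n$ into the level-$(n{-}1)$ identity, expand $(\delta-t_n)^{2k}$ binomially, split by parity of the $t_n$-exponent, and square to eliminate the lone odd occurrence of $t_n$---is exactly the paper's. The paper is in fact terser than you on the normalization claims (after displaying the even/odd split it simply asserts that ``taking squares of both sides \ldots we arrive at the conclusion''), so the monicity bookkeeping you flag as ``the main obstacle'' is not worked out there either; your slightly muddled parity remark is easily fixed by noting that in $(\delta-t_n)^{2k}$ the $\delta$- and $t_n$-exponents have the same parity, so $P$ is even in $\delta$, $Q$ is odd in $\delta$, and $P^2-t_n^2Q^2$ is even in $\delta$.
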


\begin{proof}
 We prove the conclusion by induction on $n$. First, for
$n=2$, $t_1+t_2=\delta$, which implies
$$t_1^2+2t_1t_2+t_2^2=\delta^2$$
and hence
$$t_1^4+t_2^4-2t_1^2t_2^2-2\delta^2(t_1^2+t_2^2)+\delta^4=0.$$
Namely, (\ref{quadratic-identity}) is true for $n=2$.

For the induction step, suppose (\ref{quadratic-identity}) is true
for $n$. If $$\sum_{j=1}^nt_j+t_{n+1}=\delta,$$ then
$$\sum_{j=1}^{n}t_j=\delta-t_{n+1}.$$
By the induction hypothesis, we have that there exist non-zero
polynomials $\{H_{n,j}(s_1,s_2,\cdots,s_n):
j=0,1,2,\cdots,2^{n-1}\}$ with real coefficients satisfying
$H_{n,j}(s_1,s_2,\cdots,s_n)(j=1,2,\cdots,2^{n-1})$ are homogeneous,
$H_{n,2^{n-1}}(s_1,s_2,\cdots,s_n)$ is a monic polynomial of degree
$2^{n-1}$, and
\begin{eqnarray}\label{induction-step2}
\sum^{2^{n-1}}_{j=0}H_{n,j}(t_1^2,t_2^2,\cdots,t_n^2)(\delta-t_{n+1})^{2(2^{n-1}-j)}=0.
\end{eqnarray}
Applying Newton binomial theory, we then get that
\begin{eqnarray*}
& &H_{n,n}(t_1^2,t_2^2,\cdots,t_n^2)+\delta^{2^n}+t_{n+1}^{2^n}\\
&
&~~~+\sum^{2^{n-1}-1}_{j=1}H_{n,j}(t_1^2,t_2^2,\cdots,t_n^2)\sum_{l=0}^{2^{n-1}-j}
\begin{pmatrix}2^{n}-2j\\2l\end{pmatrix}\delta^{2^{n}-2j-2l}t^{2l}_{n+1}\\
& &~~~+\sum^{2^{n-1}-1}_{l=1}\begin{pmatrix}2^n\\ 2l\end{pmatrix}\delta^{2^{n}-2l}t^{2l}_{n+1}\\
&=&\delta
t_{n+1}\left(\sum^{2^{n-1}-1}_{j=1}H_{n,j}(t_1^2,t_2^2,\cdots,t_n^2)\sum^{2^{n-1}-j}_{l=1}\begin{pmatrix}2^{n}-2j\\
2l-1\end{pmatrix}\delta^{2^{n}-2j-2l}t^{2l-2}_{n+1}\right)\\
& &+\delta
t_{n+1}\sum^{2^{n-1}}_{l=1}\begin{pmatrix}2^n\\
2l-1\end{pmatrix}\delta^{2^{n}-2l}t^{2l-2}_{n+1}.
\end{eqnarray*}
Taking squares of both sides of the above $=$, we arrive at the
conclusion of Lemma \ref {quadratic-lemma} when $n$ is replaced by
$n+1$. This completes the proof.
\end{proof}

The following conclusion results from  the proof of Theorem 1 of
\cite{pino-bokanowski-etal07} (c.f. also \cite{lieb83}):

\begin{lemma}\label{pino-etal-lemma}
Given $v,v'\in \mathbb{V}$.  Let $\rho_v=\rho^{\Psi_v}$ and
$\rho_{v'}=\rho^{\Psi_{v'}}$ with $\Psi_v\in \mathscr {G}_{v}$ and
$\Psi_{v'}\in \mathscr {G}_{v'}$. If $\rho_v=\rho_{v'}$, then
\begin{eqnarray}\label{distribution-identity}
\left(\sum^N_{i=1}(v'-v)(x_i)-(E(v')-E(v))\right)\Psi_v=0.
\end{eqnarray}
\end{lemma}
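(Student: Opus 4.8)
The plan is to run the classical Hohenberg--Kohn double variational argument, whose only real content here is to show that the shared density forces $\Psi_v$ to be \emph{simultaneously} a ground state of ${\cal H}_v$ and of ${\cal H}_{v'}$. First I would record the operator identity ${\cal H}_{v'}={\cal H}_v+\sum_{i=1}^N(v'-v)(x_i)$, which is immediate from ${\cal H}_v={\cal H}_0+\sum_i v(x_i)$. Since $\Psi_v\in\mathscr{W}_N$ is admissible in the variational problem (\ref{ground-state}) for ${\cal H}_{v'}$, the definition of $E(v')$ gives $(\Psi_v,{\cal H}_{v'}\Psi_v)\ge E(v')$. Evaluating the left-hand side with the identity above, and using $(\Psi_v,{\cal H}_v\Psi_v)=E(v)$ together with $(\Psi_v,\sum_i(v'-v)(x_i)\Psi_v)=\int_{\mathbb{R}^3}(v'-v)(x)\rho_v(x)\,dx$ by the definition (\ref{density-def}) of the density, this reads
\begin{eqnarray*}
E(v)+\int_{\mathbb{R}^3}(v'-v)(x)\rho_v(x)\,dx\ge E(v').
\end{eqnarray*}

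Next I would write the symmetric inequality obtained by testing the variational problem for ${\cal H}_v$ against $\Psi_{v'}$, namely $E(v')+\int_{\mathbb{R}^3}(v-v')(x)\rho_{v'}(x)\,dx\ge E(v)$. Here is where the hypothesis $\rho_v=\rho_{v'}$ enters: after replacing $\rho_{v'}$ by $\rho_v$, the first inequality becomes $\int_{\mathbb{R}^3}(v'-v)\rho_v\ge E(v')-E(v)$ while the second becomes $\int_{\mathbb{R}^3}(v'-v)\rho_v\le E(v')-E(v)$, forcing equality throughout. In particular the first inequality was in fact an equality, i.e. $(\Psi_v,{\cal H}_{v'}\Psi_v)=E(v')$, so $\Psi_v$ attains the infimum in (\ref{ground-state}) for the potential $v'$. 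By the definition of $\mathscr{G}_{v'}$ as the argmin set, this means $\Psi_v\in\mathscr{G}_{v'}$.

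Finally I would conclude by subtracting eigenvalue equations. Since $\Psi_v\in\mathscr{G}_v$ and (now) $\Psi_v\in\mathscr{G}_{v'}$, the distributional relation (\ref{ground-state-weak}) gives both ${\cal H}_v\Psi_v=E(v)\Psi_v$ and ${\cal H}_{v'}\Psi_v=E(v')\Psi_v$; subtracting and using ${\cal H}_{v'}-{\cal H}_v=\sum_i(v'-v)(x_i)$ yields exactly (\ref{distribution-identity}). The main point to make rigorous is that the expectations are finite and the quadratic-form reductions are legitimate: since $v'-v\in\mathbb{V}=L^{3/2}(\mathbb{R}^3)+L^{\infty}(\mathbb{R}^3)$ and $\Psi_v\in H^1(\mathbb{R}^{3N})$, the potential $\sum_i(v'-v)(x_i)$ is infinitesimally form-bounded relative to $T$ (Sobolev embedding $H^1\hookrightarrow L^6$ plus H\"older on the $L^{3/2}$ piece), so $(\Psi_v,{\cal H}_{v'}\Psi_v)$ is well defined and reduces to $\int_{\mathbb{R}^3}(v'-v)\rho_v$. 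The only other delicate step is the passage from the equality $(\Psi_v,{\cal H}_{v'}\Psi_v)=E(v')$ to membership $\Psi_v\in\mathscr{G}_{v'}$; this is immediate from the definition of the argmin set but does rely on $\Psi_v$ being a genuine normalized element of $\mathscr{W}_N$, which is already guaranteed by $\Psi_v\in\mathscr{G}_v\subset\mathscr{W}_N$.
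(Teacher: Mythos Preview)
Your proof is correct and follows essentially the same approach as the paper: the double variational inequality, saturation under the hypothesis $\rho_v=\rho_{v'}$, membership of $\Psi_v$ in $\mathscr{G}_{v'}$, and subtraction of the two distributional eigenvalue equations (\ref{ground-state-weak}). The only cosmetic difference is that the paper first establishes $\Psi_{v'}\in\mathscr{G}_v$ and then invokes the symmetric argument to get ${\cal H}_{v'}\Psi_v=E(v')\Psi_v$, whereas you go directly for $\Psi_v\in\mathscr{G}_{v'}$; your added remarks on form-boundedness for $v'-v\in L^{3/2}+L^\infty$ are a welcome bit of extra care.
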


\begin{proof}
For completion, we present a proof here, which essentially comes
from the proof of Theorem 1 of \cite{pino-bokanowski-etal07}. We see
that
\begin{eqnarray*}
& &E(v)=(\Psi_v,{\cal {H}}_v\Psi_v)\le (\Psi_{v'}, {\cal
{H}}_v\Psi_{v'})\\
&=& E(v')-\int_{\mathbb{R}^3}\rho_{v'}(v'-v).
\end{eqnarray*}
Similarly,
\begin{eqnarray*}
E(v')\le E(v)-\int_{\mathbb{R}^3}\rho_{v}(v-v'). \end{eqnarray*}
Thus we obtain that if $\rho_v=\rho_{v'}$, then$$
E(v')=E(v)-\int_{\mathbb{R}^3}\rho_{v}(v-v'),$$or
$$
\int_{\mathbb{R}^3}\rho_{v}(v-v')=E(v)-E(v'),
$$
which leads to $E(v)=(\Psi_{v'}, {\cal {H}}_v\Psi_{v'})$. Therefore
$\Psi_{v'}\in \mathscr {G}_{v}$ and
\begin{eqnarray*}
{\cal {H}}_v \Psi_{v'}=E(v)\Psi_{v'}. \end{eqnarray*} By a similar
argument, we have
\begin{eqnarray*} {\cal {H}}_{v'} \Psi_{v}=E(v')\Psi_{v}.
\end{eqnarray*}
Since
\begin{eqnarray*}
{\cal {H}}_v \Psi_{v}=E(v)\Psi_{v},
\end{eqnarray*}
we arrive at (\ref{distribution-identity}). This completes the
proof.
\end{proof}

Due to the Fundamental Theorem of Algebra (c.f., e.g.,
\cite{smithies00}),  every non-zero single-variable polynomial with
real or complex coefficients has exactly as many real or complex
zeroes as its degree, if each zero is counted up to its
multiplicity. Hence we have a multivariate version of the
Fundamental Theorem of Algebra as follows:

\begin{lemma}\label{multi-FTA} The Lebesgue's measure of the set of zeroes of any non-zero multivariate
polynomial with real coefficients is zero.
\end{lemma}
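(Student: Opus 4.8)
The plan is to argue by induction on the number of variables $n$, reducing everything to the single-variable Fundamental Theorem of Algebra quoted just above. For $n=1$, a non-zero polynomial of degree $d$ has at most $d$ real zeroes, so its zero set is finite and hence of Lebesgue measure zero; this is the only place the classical Fundamental Theorem of Algebra is used, and the rest is elementary measure theory.

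For the inductive step, assume the claim in $\mathbb{R}^{n-1}$ and let $p(x_1,\dots,x_n)$ be a non-zero polynomial. First I would regard it as a polynomial in the last variable, $p=\sum_{k=0}^{d}a_k(x_1,\dots,x_{n-1})\,x_n^{k}$, and fix an index $m$ with $a_m\not\equiv 0$ (say the largest such $m$). Since $p$ is continuous, its zero set $Z=\{x\in\mathbb{R}^{n}:p(x)=0\}$ is closed, hence Borel, and so are all its slices $Z_{x'}=\{x_n\in\mathbb{R}:p(x',x_n)=0\}$ for $x'=(x_1,\dots,x_{n-1})$. The key dichotomy is then: if $a_m(x')\neq 0$, the function $p(x',\cdot)$ is a non-zero univariate polynomial, so $Z_{x'}$ is finite and $\lambda_1(Z_{x'})=0$; while if $a_m(x')=0$, then $x'$ belongs to $A=\{a_m=0\}$, which is $\lambda_{n-1}$-null by the induction hypothesis applied to the non-zero $(n-1)$-variable polynomial $a_m$.

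Next I would invoke Tonelli's theorem to write $\lambda_n(Z)=\int_{\mathbb{R}^{n-1}}\lambda_1(Z_{x'})\,dx'$, the integrand being measurable by Tonelli applied to the indicator of the Borel set $Z$. Splitting over $A$ and its complement, the integral over $\mathbb{R}^{n-1}\setminus A$ is zero because the integrand vanishes identically there, and the integral over $A$ is zero because $A$ is $\lambda_{n-1}$-null (a non-negative measurable function, even one taking the value $+\infty$, integrates to $0$ over a null set). Hence $\lambda_n(Z)=0$, completing the induction. The argument is essentially routine; the only point that requires care is this last bookkeeping on the exceptional set $A$, where the slices $Z_{x'}$ may be all of $\mathbb{R}$ (precisely when every coefficient $a_k$ vanishes at $x'$) and thus have infinite one-dimensional measure — but since $A$ itself is null in $\mathbb{R}^{n-1}$, these bad slices contribute nothing, which is the crux of the reduction.
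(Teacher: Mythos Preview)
Your argument is correct and is the standard proof of this fact: induction on the number of variables, with the base case supplied by the single-variable Fundamental Theorem of Algebra and the inductive step handled by slicing and Tonelli's theorem. The bookkeeping on the exceptional set $A=\{a_m=0\}$ is exactly the right point to flag, and you handle it properly.

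As for comparison with the paper: the paper does not actually supply a proof of this lemma. It simply recalls the single-variable Fundamental Theorem of Algebra and then writes ``Hence we have a multivariate version\ldots'' before stating the lemma, with no further argument. Your induction-plus-Tonelli argument is precisely the routine justification that the paper's ``Hence'' is presumably gesturing toward, so in spirit you are following the same route (reduction to the one-variable case), but you are providing the details the paper omits.
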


\section{Proof}\label{proof}
\setcounter{equation}{0} In this section, we prove Theorem
\ref{hohenberg-kohn}, the precise and new statement of
Hohenberg-Kohn theorem.
\begin{proof} Let $v,v'\in \mathbb{V}_{C}$.
 Choose
$\Psi_v\in \mathscr {G}_{v}$ and $\Psi_{v'}\in \mathscr {G}_{v'}$
such that $\rho_v=\rho^{\Psi_v}$ and $\rho_{v'}=\rho^{\Psi_{v'}}$.
It is sufficient to prove that $v=v'$ if $\rho_v=\rho_{v'}$.

Note that there exist $m\ge 1, r_j\in \mathbb{R}^3$ and $\alpha_j\in
\mathbb{R}(j=1,2,\cdots,m)$ such that
$$(v'-v)(x)=\sum^m_{j=1}\frac{\alpha_j}{|x-r_j|}.$$
Suppose $v'\not=v$, we have $\alpha_{j_0}\not= 0$ for some
$j_0\in\{1,2,\cdots,m\}$. As a result, if equation
\begin{eqnarray}\label{v-E-identity}
\sum^N_{i=1}(v'-v)(x_i)=E(v')-E(v)
\end{eqnarray}
holds, then there exist non-zero polynomials
$\{H_{n,j}(s_1,s_2,\cdots,s_n): j=0,1,2,\cdots,2^{n-1}\}$ with real
coefficients satisfying the conclusion of Lemma
\ref{quadratic-lemma} with $n=mN$ and
\begin{eqnarray*}
\delta&=&\frac{\alpha_{j_0}}{|x_{1}-r_{j_0}|},\\
t_1&=& E(v')-E(v),\\
\{t_l:l=2,3,\cdots,n\}&=&\left\{\frac{-\alpha_j}{|x_i-r_j|}:
i=1,2,\cdots,N;j=1,2,\cdots,m\right\}\setminus
\{-\delta\}.\end{eqnarray*} Therefore there exists a non-zero
multivariate polynomial $P(s_1,s_2,\cdots,s_n)$ with real
coefficients
 such that
\begin{eqnarray}\label{non-zero-poly}
& &P(|x_1-r_1|^2,\cdots,|x_i-r_j|^2,\cdots,|x_N-r_m|^2)=0,\nonumber\\
& &~~x_i\in \mathbb{R}^3\setminus \{r_j: j=1,2,\cdots,m\},
i=1,2,\cdots,N
\end{eqnarray}
and the set of the solutions $(x_1,x_2,\cdots,x_N)$ of
(\ref{v-E-identity}) are a subset of zeroes of (\ref{non-zero-poly})
and hence of zero measure in $\mathbb{R}^{3N}$ by Lemma
\ref{multi-FTA}. Since
$$\sum^N_{i=1}(v'-v)(x_i)-(E(v')-E(v))$$  is continuous over domain
$$\{(x_1,x_2,\cdots,x_N): x_i\in
\mathbb{R}^3\setminus \{r_j: j=1,2,\cdots,m\}, i=1,2,\cdots,N\}$$ as
a function of $(x_1,x_2,\cdots,x_N)\in\mathbb{R}^{3N}$, we must have
$\Psi_v=0$ almost every where in $\mathbb{R}^{3N}$ from
(\ref{distribution-identity}), which is a contradiction to
$$\sum_{\sigma_1,\sigma_2,\cdots,\sigma_N}\int_{\mathbb{R}^{3N}}|\Psi_v|^2=1.$$
This completes the proof.
\end{proof}

\vskip 0.2cm

{\sc Acknowledgements.} The author would like to thank Prof. X.
Gong and other members in the author's group for their stimulating
discussions and fruitful cooperations that have motivated this work.


\begin{thebibliography}{99}


\bibitem{eschrig96} {\sc H.~Eschrig}, {\em The Fundamentals of
Density Functional Theory}, Teubner, Stuttgart, 1996.

\bibitem{gross-dreizler95}
{\sc E.~K.~U.~Gross and R.~M.~Dreizler} (eds.), {\em Density
Functional Theory}, Plenum Press, New York, 1995.


\bibitem{hadjisavvas-theophilou84}
{\sc N.~Hadjisavvas and A.~Theophilou}, {\em Rigorous formulation of
the Kohn-Sham theory}, Phy. Rev. A., {\bf 30} (1984), pp.~2183-2186.

\bibitem{hohenberg-kohn64} {\sc P.~Hohenberg and W.~Kohn}, {\em
Inhomogeneous gas}, Phys. Rev., {\bf 136} (1964), pp.~B864-B871.


\bibitem{jensen99} {\sc F.~Jensen}, {\em Introduction to
Computational Chemistry}, Wiley, New York, 1999.

\bibitem{kato57}{\sc T.~Kato},
{\em On the eigenfunctions of many-particle systems in quantum
mechanics}, Commun. Pure Appl. Math., {\bf 10} (1957), pp.~151-177.

\bibitem{kohn-sham65}
{\sc W.~Kohn and L.~J.~Sham}, {\em Self-consistent equations
including exchange and correlation effects}, Phys. Rev. A, {\bf 140}
(1965), pp.~4743-4754.

\bibitem{kryachko05}
{\sc E.~S.~Kryachko}, {\em On the original proof by reductio ad
absurdum of the Hohenberg-Kohn theorem for many-electron Coulomb
systems}, Int. J. Quantum Chem., {\bf 103} (2005), pp.~818-823.

\bibitem{kryachko06}
{\sc E.~S.~Kryachko}, {\em On the proof by reductio ad absurdum of
the Hohenberg-Kohn theorem for ensembles of fractionally occupied
states of Coulomb systems}, Int. J. Quantum Chem., {\bf 106} (2006),
pp.~1795-1798.


\bibitem{levy79}
{\sc M.~Levy}, {\em University variational functionals of electron
densities, first-order density matrices, and natural spin-orbitals
and solutions of thev-representability problems}, Proc. Nat. Acad.
Sci. USA, {\bf 76} (1979), pp.~6062-6065.



\bibitem{lieb83}
{\sc E.~H.~Lieb}, {\em Density functionals for Coulomb systems},
Int. J. Quantum Chem., {\bf 24} (1983), pp.~243-277.


\bibitem{martin04} {\sc R.~Martin}, {\em Electronic Structure:
    Basic Theory and Practical Methods}, Cambridge University Press,
  London, 2004.

\bibitem{parr-yang89}
{\sc R.~G.~Parr and W.~T.~Yang}, {\em Density-Functional Theory of
Atoms and Molecules}, Oxford University Press, Oxford, 1989.

\bibitem{pino-bokanowski-etal07}
{\sc R.~Pino, O.~Bokanowski, E.~V.~Lude${\tilde n}$a, and
R.~L.~Boada}, {\em A re-statement of the Hohenberg-Kohn theorem and
its extension to finite subspaces}, Theor. Chem. Account, {\bf 118}
(2007), pp.~557-561.

\bibitem{smithies00}
{\sc F.~Smithies}, {\em A forgotten paper on the fundamental theorem
of algebra}, Notes Rec. R. Soc. Lond., {\bf 54} (2000), pp.~333-341.

\bibitem{szczepanik-dulak-wesolowski07}
{\sc W.~Szczepanik, M.~Dulak, and T.~A.~Wesolowski}, {\em Comment on
``On the original proof by reductio ad absurdum of the
Hohenberg-Kohn theorem for many-electron Coulomb systems"}, Int. J.
Quantum Chem., {\bf 107} (2007), pp.~762-763.

\end{thebibliography}
\end{document}